\algrenewcommand\algorithmicrequire{\textbf{Input:}}
\algrenewcommand\algorithmicensure{\textbf{Output:}}
\definecolor{cambridgeblue}{rgb}{0.64, 0.76, 0.68}
\definecolor{ceil}{rgb}{0.57, 0.63, 0.81}
\definecolor{celestialblue}{rgb}{0.29, 0.59, 0.82}
\newcommand{\CommentState}[1]{\Statex\hspace{\algorithmicindent}{\color{celestialblue}// #1}}
\newtheorem{definition}{Definition}
\newtheorem{remark}{Remark}
\newtheorem{proposition}{Proposition}
\newtheorem{theorem}{Theorem}
\newtheorem{assumption}{\bfseries Assumption}
\newtheorem{lemma}{Lemma}
\newenvironment{list4}{
\begin{list}{$\bullet$}{%
    \setlength{\itemsep}{0.05cm}
    \setlength{\labelsep}{0.2cm}
    \setlength{\labelwidth}{0.3cm}
    \setlength{\parsep}{0in} 
    \setlength{\parskip}{0in}
    \setlength{\topsep}{0in} 
    \setlength{\partopsep}{0in}
    \setlength{\leftmargin}{0.18in}}}
{\end{list}}
\DeclareMathOperator*{\argmin}{arg\,min}
\DeclareMathOperator{\proj}{proj}
\newcommand{\norm}[1]{\left\lVert#1\right\rVert}
\newcommand{\ev}[1]{\mathbb{E}\left[#1\right]}
\newcommand{\R}{\mathbb{R}}
\newcommand{\C}{\mathcal{C}}
\newcommand{\e}{\mathbold{e}}
\newcommand{\x}{\mathbold{x}}
\title{\LARGE \bf
Distributed Optimization and Learning for Automated Stepsize Selection \\ with Finite Time Coordination
}
\author{Apostolos~I.~Rikos, Nicola~Bastianello, Themistoklis~Charalambous and
 Karl~H.~Johansson
    \thanks{Apostolos~I.~Rikos is with the Artificial Intelligence Thrust of the Information Hub, The Hong Kong University of Science and Technology (Guangzhou), Guangzhou, China. 
    He is also affiliated with the Department of Computer Science and Engineering, The Hong Kong University of Science and Technology, Clear Water Bay, Hong Kong. E-mail: {\tt~apostolosr@hkust-gz.edu.cn}.}
    \thanks{N.~Bastianello and K.~H.~Johansson are with the School of Electrical Engineering and Computer Science, and Digital Futures, KTH Royal Institute of Technology, Sweden. They are also affiliated with Digital Futures, SE-100 44 Stockholm, Sweden. E-mails: {\tt~\{nicolba | kallej\}@kth.se}.}
    \thanks{Themistoklis~Charalambous is with the Department of Electrical and Computer Engineering, University of Cyprus, 1678 Nicosia, Cyprus. 
    He is also a Visiting Professor with the Department of Electrical Engineering and Automation, School of Electrical Engineering, Aalto University. 
    E-mail:{\tt~charalambous.themistoklis@ucy.ac.cy}.}
\thanks{The work of A.I.R. was supported by the Guangzhou-HKUST(GZ) Joint Funding Scheme (Grant No. 2025A03J3960). 
The work of N.B. and K.H.J. was partially supported by the European Union’s Horizon Research and Innovation Actions programme under grant agreement No. 101070162, and partially by Swedish Research Council Distinguished Professor Grant 2017-01078 Knut and Alice Wallenberg Foundation Wallenberg Scholar Grant. 
The work of T.C. was partially funded by MINERVA, a European Research Council (ERC) project funded under the European Union's Horizon 2022 research and innovation programme (Grant agreement No. 101044629).}
}
\begin{document}

\maketitle
\thispagestyle{plain}
\pagestyle{plain}

\begin{abstract}
%
Distributed optimization and learning algorithms are designed to operate over large scale networks enabling processing of vast amounts of data effectively and efficiently. 
One of the main challenges for ensuring a smooth learning process in gradient-based methods is the appropriate selection of a learning stepsize. 
Most current distributed approaches let individual nodes adapt their stepsizes locally. 
However, this may introduce stepsize heterogeneity in the network, thus disrupting the learning process and potentially leading to divergence. 
In this paper, we propose a distributed learning algorithm that incorporates a novel mechanism for automating stepsize selection among nodes. 
Our main idea relies on implementing a finite time coordination algorithm for eliminating stepsize heterogeneity among nodes. 
We analyze the operation of our algorithm and we establish its convergence to the optimal solution. 
We conclude our paper with numerical simulations for a linear regression problem, showcasing that eliminating stepsize heterogeneity enhances convergence speed and accuracy against current approaches. 
\end{abstract}

\begin{keywords}
Distributed optimization and learning, automated stepsize selection, finite-time coordination. 
\end{keywords}

\section{Introduction}\label{sec:intro}

In today's interconnected world, networks of devices (or nodes) are increasingly deployed to collect and process vast amounts of data from their environments. 
Efficiently processing this data is essential for solving complex problems and making informed decisions. 
While centralized approaches (where nodes transmit their data to a single entity for processing) have been widely used, they face significant limitations such as high computational demands on the central entity, risks to data privacy, vulnerability to single points of failure, and scalability challenges \cite{2021_Meng_Yang_Johansson_Book}. 
As a result, there is growing interest in distributed algorithms that enable interconnected nodes to make collective decisions by locally processing their stored dataset and communicating with other nodes only the result of their computation. 
This shift has positioned the distributed optimization and learning problem as a pivotal area of research in the scientific community \cite{2021_Meng_Yang_Johansson_Book, 2018_liu_nedic_distr_opt_survey, 2025_doostmohammadian_rikos_Johansson_survey}. 

The problem of distributed optimization and learning has become increasingly important in recent years due to the rise of large-scale machine learning \cite{2020:Nedich}, control \cite{SEYBOTH:2013}, and other data-driven applications \cite{2018:Stich_Jaggi} that involve the processing of massive amounts of data. 
Distributed optimization and learning algorithms require nodes to perform two fundamental operations, processing locally stored data (typically through gradient descent), and communicating the results of this processing with other nodes (by exchanging either real-valued messages \cite{2021:Nedic_PushPull, 2022:Jiang_Charalambous}, or quantized messages \cite{2023:Bastianello_Rikos_Johansson, 2023:Rikos_Johan_IFAC}). 
A critical parameter in gradient-based local processing is the learning stepsize as it determines the magnitude of parameter updates based on the observed gradient. 
Improper stepsize selection may result in slow convergence or even divergence. 
Therefore, selecting an appropriate stepsize is a critical challenge in distributed optimization and learning. 

While centralized optimization leverages various techniques for adaptive stepsize selection (see, for example, AdaGrad~\cite{JMLR:v12:duchi11a} and Adam~\cite{ADAM:2015}), applying these methods in a distributed setting presents significant difficulties. 
In most existing distributed optimization and learning algorithms, the stepsize is typically chosen as a fixed or a diminishing value (see, for example \cite{Nedic:2024} and references therein). 
This choice depends on factors such as the network size, topological structure, and the Lipschitz constant of the global objective function, all of which are challenging or even infeasible to obtain in a distributed fashion. 
On another note, a proper stepsize selection may require extensive manual tuning from users to ensure effective algorithm convergence. 
However, a well-tuned stepsize for one scenario may not generalize well to another, limiting its adaptability. 
Another challenge arises when nodes independently adapt their stepsizes based on local data or partial views of the global objective function. 
This introduces stepsize heterogeneity across the network and can cause algorithmic divergence. 

For addressing the aforementioned challenges, recent research focused on developing distributed optimization and learning algorithms that incorporate stepsize adaptation techniques. 
For instance, the work in \cite{2012_Jadbabaie_Line_Search} introduced a backtracking line-search method for adapting stepsizes in a distributed optimization setting. 
However, this approach incurs significant computational overhead as it requires executing additional subroutines and performing multiple gradient evaluations. 
The work in \cite{gao2022achieving} employed the Barzilai-Borwein stepsize in a gradient-tracking-based algorithm, but this method depends on prior knowledge of the global Lipschitz constant and strong convexity coefficient (which are often impractical to obtain in distributed environments). 
Another approach proposed in \cite{Michailidis2022dadam}, introduced an automated stepsize selection mechanism for distributed learning algorithms. 
However, this method still requires manual tuning of learning-rate parameters, which limits its adaptability across different scenarios.
The works in \cite{liggett2022distributed, Yongqiang2024AutomatedStepsizes} proposed distributed learning algorithms that successfully automate stepsize selection without requiring manual adjustments from users. 
Nevertheless, their algorithms operate only for undirected or balanced directed graphs. 
Additionally, their operation relies on the implementation of dynamic consensus on the stepsizes, leading to possible instabilities and instances where convergence is not guaranteed. 
This is because dynamic consensus involves only a single communication step, which reduces stepsize heterogeneity but fails to eliminate it entirely, leaving room for misalignment in gradient descent directions and potential divergence in distributed optimization processes. 
Finally, the work in \cite{Nedic:2024} proposes distributed optimization algorithms that leverage a designed mixing matrix and time-varying uncoordinated stepsizes to enable privacy among nodes. 
However, these algorithms also suffer from the same shortcomings as the previously mentioned approaches. 
Specifically, they manage to reduce stepsize heterogeneity without eliminating it entirely and are restricted to operating only over undirected graphs.
Overall, the aforementioned works successfully reduce stepsize heterogeneity among nodes, without however eliminating it completely. 
Furthermore, either they impose significant computational overhead or are limited to operating over undirected or balanced directed graphs. 
To the best of the authors' knowledge, designing distributed learning algorithms with integrated mechanisms for automated stepsize selection that completely eliminate stepsize heterogeneity (thereby enhancing convergence stability and efficiency) while operating in directed, unbalanced communication networks remains an open problem in the literature. 

\textbf{Main Contributions.} 
Motivated by the aforementioned gap in the literature, we propose a distributed learning algorithm that incorporates a novel mechanism for automated stepsize selection while operating in directed, unbalanced communication networks. 
Our algorithm eliminates stepsize heterogeneity among nodes, thereby enhancing convergence stability and efficiency. 
Our approach is inspired from a recently proposed stepsize automation method for centralized optimization in \cite{Mishchenko2020Adaptive} and the proposed method in \cite{liggett2022distributed}. 
It is important to note that directly applying the centralized stepsize automation technique from \cite{Mishchenko2020Adaptive} to distributed settings inevitably results in stepsize heterogeneity among agents, which may lead to divergence. 
Therefore, the extension of centralized stepsize automation techniques to distributed optimization settings serves as an additional motivation for this work. 
Our main contributions are the following. 
\\ 
\textbf{A.} We propose a distributed learning algorithm that incorporates a novel mechanism for automated stepsize selection while operating in directed, unbalanced communication networks (see Algorithm~\ref{Algorithm_real}). 
Our algorithm’s operation relies on the implementation of a finite time coordination algorithm for eliminating stepsize heterogeneity among nodes. 
This strategy enhances convergence stability and efficiency by ensuring synchronized updates across all nodes. 
\\ 
\textbf{B.} 
We establish our algorithm's convergence to the optimal solution for the case where the local cost functions of nodes are strongly convex and smooth (see Theorem~\ref{theorem_convergence_Alg1}).

%
%
%
%
\section{Notation and Background}\label{preliminaries}

We denote the sets of real, rational, natural, integer, nonnegative real, and nonnegative integer numbers by $\mathbb{R}$, $\mathbb{Q}$, $\mathbb{N}$, $\mathbb{Z}$, $\mathbb{R}_{+}$, and $\mathbb{Z}_+$, respectively. 
For a differentiable function $f$ its gradient is defined as $\nabla f$, and its stochastic gradient as $\hat{\nabla} f$. 
The expected value of a random variable $X$ is denoted by $\ev{X}$.
For a set $C$, the projection of a point $y$ onto $C$ is defined as $\proj_\C(y) = \argmin_{x \in C} \norm{x - y}^2$. 

\textbf{Graph-Theoretic Framework.} 
We consider a network comprising $N$ nodes ($n \geq 2$), where communication is restricted to immediate neighbors. 
The communication topology is represented by a directed graph (digraph) $\mathcal{G}_d = (\mathcal{V}, \mathcal{E})$, where 
$\mathcal{V} = \{v_1, v_2, \dots, v_n\}$ is the set of nodes, $\mathcal{E} \subseteq \mathcal{V} \times \mathcal{V} - \{ (v_i, v_i) \mid v_i \in \mathcal{V} \}$ is the set of edges (excluding self-edges). 
A directed edge from node $v_i$ to node $v_l$, denoted as $m_{li} \triangleq (v_l, v_i) \in \mathcal{E}$, indicates that node $v_l$ can receive information from node $v_i$ (unidirectional communication). 
A digraph $\mathcal{G}_d = (\mathcal{V}, \mathcal{E})$ being \textit{strongly connected} means that for any pair of distinct nodes $v_l, v_i \in \mathcal{V}$, there exists a directed \textit{path} from $v_i$ to $v_l$. 
For each node $v_i$, we define the sets: in-neighbors as $\mathcal{N}_i^- = \{ v_j \in \mathcal{V} \mid (v_i,v_j)\in \mathcal{E}\}$, out-neighbors as $\mathcal{N}_i^+ = \{ v_l \in \mathcal{V} \mid (v_l,v_i)\in \mathcal{E}\}$, in-degree as $\mathcal{D}_i^- = | \mathcal{N}_i^- |$, out-degree as $\mathcal{D}_i^+ = | \mathcal{N}_i^+ |$. 
The in-neighbors of $v_i$ are nodes that can directly transmit information to $v_i$, and the out-neighbors are nodes that can directly receive information from $v_i$. 
The in-degree and out-degree represent the cardinalities of these sets, respectively. 



\section{Preliminaries on Distributed Coordination}\label{prel_distr_coord}

\subsection{Ratio Consensus}\label{subsec_ratio_cons}

The work in \cite{2010:christoforos} introduced a novel algorithm called ratio consensus to solve the average consensus problem in directed graphs. 
In this algorithm, each node $v_i$ assigns positive weights to its self-link and outgoing links, resulting in a column stochastic weight matrix $P$ (note that $P$ is not necessarily row stochastic). 
By utilizing this weight matrix $P$ and executing two iterations with specific initial conditions, the algorithm achieves average consensus. 
This algorithm is detailed below in Proposition~\ref{lemma_christoforos}. 
It prescribes a particular weighting scheme for each link, assuming each node knows its out-degree. 
It is worth noting that the algorithm is applicable to any weight configuration conforming to the graph topology and forming a primitive column stochastic weight matrix. 
The algorithm in Proposition~\ref{lemma_christoforos} below achieves the precise average in an \textit{asymptotic} manner. 

\begin{proposition}[\hspace{-0.09cm} \cite{2010:christoforos}] 
\label{lemma_christoforos} 
Consider a strongly connected directed graph $\mathcal{G}_d = (\mathcal{V}, \mathcal{E})$. 
Let $y_i^{[t]}$ and $x_i^{[t]}$ (for all $v_i \in \mathcal{V}$ and $t = 0,1,2,\ldots$) be the results of the iterations 
\begin{subequations}\label{eq:1}
\begin{align}
y_i^{[t+1]} = p_{ii} y_i^{[t]} + \sum_{v_j \in \mathcal{N}^{-}_i} p_{ij} y_j^{[t]} , \label{subeq:1} \\
x_i^{[t+1]} = p_{ii} x_i^{[t]} + \sum_{v_j \in \mathcal{N}^{-}_i} p_{ij} x_j^{[t]} , \label{subeq:2}
\end{align}
\end{subequations}
where $p_{ij} = \frac{1}{1 + \mathcal{D}_j^+}$ for $v_i \in \mathcal{N}_j^+ \cup { v_j }$ (zeros otherwise), with initial conditions $x_i^{[0]} = \bm{1}$ and $y_i^{[0]}$ being the initial state of each node. 
Then, the average consensus problem is asymptotically solved as 
$
\lim_{k \rightarrow \infty} \mu_i^{[t]}=\frac{\sum_{v_i \in \mathcal{V}} y_i^{[0]}}{|\mathcal{V}|} , \ \forall v_i \in \mathcal{V}, 
$
where
$
\mu_i^{[t]} = y_i^{[t]} / x_i^{[t]}. 
$
\end{proposition}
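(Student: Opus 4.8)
The plan is to recast the two iterations in matrix form and then invoke the Perron--Frobenius theorem for primitive nonnegative matrices. Stacking the node values into the vectors $\bm{y}^{[t]} = (y_1^{[t]}, \dots, y_n^{[t]})^\top$ and $\bm{x}^{[t]} = (x_1^{[t]}, \dots, x_n^{[t]})^\top$, the recursions \eqref{subeq:1}--\eqref{subeq:2} become $\bm{y}^{[t+1]} = P\,\bm{y}^{[t]}$ and $\bm{x}^{[t+1]} = P\,\bm{x}^{[t]}$, where $P = [p_{ij}]$ is the weight matrix prescribed in the statement; iterating from the given initial conditions gives $\bm{y}^{[t]} = P^{t}\bm{y}^{[0]}$ and $\bm{x}^{[t]} = P^{t}\bm{1}$. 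First I would verify that $P$ is column stochastic: in column $j$ the nonzero entries are exactly $p_{ij} = 1/(1+\mathcal{D}_j^+)$ for the $1+\mathcal{D}_j^+$ indices $i$ with $v_i \in \mathcal{N}_j^+ \cup \{v_j\}$, and these sum to one, so $\bm{1}^\top P = \bm{1}^\top$.

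Next I would argue that $P$ is primitive. Its off-diagonal sparsity pattern coincides with the edge set of the strongly connected digraph $\mathcal{G}_d$, so the directed graph induced by $P$ is strongly connected and $P$ is irreducible; moreover $p_{ii} = 1/(1+\mathcal{D}_i^+) > 0$ for every $v_i \in \mathcal{V}$, so $P$ is aperiodic, and an irreducible aperiodic nonnegative matrix is primitive. The Perron--Frobenius theorem then yields that the spectral radius of $P$ equals $1$, that $1$ is a simple eigenvalue, and that there is a componentwise positive right eigenvector $\bm{v}$ with $P\bm{v} = \bm{v}$, which I normalize so that $\bm{1}^\top\bm{v} = 1$. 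Since $\bm{1}$ is the associated left eigenvector and $\bm{1}^\top\bm{v} = 1$, primitivity gives $\lim_{t\to\infty} P^{t} = \bm{v}\,\bm{1}^\top$.

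It then remains to pass to the limit in the ratio. Using $\bm{y}^{[t]} = P^{t}\bm{y}^{[0]}$ and $\bm{x}^{[t]} = P^{t}\bm{1}$,
\[
\lim_{t\to\infty}\bm{y}^{[t]} = \bm{v}\,(\bm{1}^\top\bm{y}^{[0]}) = \left(\sum_{v_i \in \mathcal{V}} y_i^{[0]}\right)\bm{v}, \qquad \lim_{t\to\infty}\bm{x}^{[t]} = \bm{v}\,(\bm{1}^\top\bm{1}) = |\mathcal{V}|\,\bm{v} .
\]
Because $P$ is nonnegative with strictly positive diagonal and $\bm{x}^{[0]} = \bm{1} > 0$, a one-line induction using $x_i^{[t+1]} \ge p_{ii}\,x_i^{[t]}$ shows $x_i^{[t]} > 0$ for all $t$, so $\mu_i^{[t]} = y_i^{[t]}/x_i^{[t]}$ is well defined; since $v_i > 0$, the common factor $v_i$ cancels in the limit, giving $\lim_{t\to\infty}\mu_i^{[t]} = \big(\sum_{v_i \in \mathcal{V}} y_i^{[0]}\big)/|\mathcal{V}|$ for every $v_i \in \mathcal{V}$, as claimed. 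The main obstacle is the convergence $P^{t}\to\bm{v}\bm{1}^\top$: it relies on $P$ being primitive --- i.e., strong connectivity of $\mathcal{G}_d$ \emph{together with} the strictly positive self-weights that guarantee aperiodicity --- and on the correct normalization of the Perron eigenvector. The conceptual crux is that $\bm{y}^{[t]}$ and $\bm{x}^{[t]}$ need not reach consensus individually, yet both asymptotically align with the same (a priori unknown) direction $\bm{v}$, so $\bm{v}$ cancels in the ratio and only the averaged initial value survives.
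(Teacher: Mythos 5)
Your proof is correct and is essentially the canonical Perron--Frobenius argument for ratio consensus; the paper itself does not prove this proposition but imports it from the cited reference, whose proof follows the same route (column stochasticity, primitivity from strong connectivity plus positive self-weights, convergence of $P^t$ to $\bm{v}\bm{1}^\top$, and cancellation of the Perron vector in the ratio). The only detail worth making explicit is that passing to the quotient of limits is legitimate because $\lim_{t\to\infty} x_i^{[t]} = |\mathcal{V}|\,v_i > 0$, which you have effectively covered by establishing positivity of $\bm{v}$.
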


\subsection{Finite-Time Exact Ratio Consensus (FTERC)}\label{subsec_FTERC}


We now introduce a distributed algorithm (developed in \cite{themisCDC:2013} and \cite{themisTCNS:2015}) that enables each node to calculate the exact average within a minimal number of time steps using only its local observations. 
This algorithm builds upon the algorithm presented in Proposition~\ref{lemma_christoforos}, allowing each node to compute $\mu_i \triangleq \lim_{t \rightarrow \infty} \mu_i^{[t]}=\frac{\sum_{v_i \in \mathcal{V}} y_i^{[0]}}{|\mathcal{V}|} , \ \forall v_i \in \mathcal{V}$ in the \emph{minimum} possible number of iterations. 


\begin{definition}[Minimal Polynomial of a Matrix Pair] 
For a matrix pair $[P,e_i^\top]$, the associated minimal polynomial, denoted by
$q_i(s)=s^{[M_i+1]}+\sum_{j=0}^{M_i} \alpha^{(i)}_j s^{[j]}$, is the unique monic polynomial of lowest degree $M_i+1$ that
satisfies $e^\top_i q_i(P)=0$.
\end{definition}

For the iteration $w_i^{[t+1]} = P w_i^{[t]}$ with weight matrix $P$, it can be demonstrated (using methods similar to those in \cite{2009:Ye}) that
\begin{equation}\label{regression}
\sum _{j=0}^{M_i+1} \alpha^{(i)}_j w_i^{[t+j]}=0, \quad \forall t \in \mathbb{Z}_+ , 
\end{equation}
where $\alpha^{(i)}_{M_j+1}=1$. 
Let $W_i(z) \triangleq \mathcal{Z}(w_i^{[t]})$ denote the $z$-transform of $w_i^{[t]}$. 
Applying \eqref{regression} and leveraging the time-shift property of the $z-$transform yields (as shown in \cite{2009:Ye,2013:Ye})
\begin{equation} \label{ztranform}
W_i(z)=\frac{\sum _{j=1}^{M_i+1} \alpha^{(i)}_j \sum _{\ell=0}^{j-1}w_i^{[\ell]} z^{[j-\ell]}}{q_i(z)} ,
\end{equation}
where $q_i(z)$ represents the minimal polynomial of $[P,e^\top_i]$. 
For a strongly connected network, $q_i(z)$ has no unstable poles except for one at $1$. 
This allows us to define the polynomial:
\begin{align}\label{beta}
p_i(z) \triangleq \frac{q_i(z)}{z-1} \triangleq \sum_{j=0}^{M_i} \beta^{(i)}_j z^{[j]} .
\end{align}


Applying the final value theorem \cite{2009:Ye,2013:Ye}, we obtain:
\begin{subequations}\label{eq:phi}
\begin{align}
\phi_{y}(i) &= \lim_{t\rightarrow \infty}y_i^{[t]} = \lim_{z\rightarrow 1}(z-1)Y_i(z) = \frac{y_{M_i}^\top {\bm \beta}_i}{\bm{1}^\top {\bm \beta}_i} , \label{phi:1} \\
\phi_{x}(i) &= \lim_{t\rightarrow \infty}x_i^{[t]} = \lim_{z\rightarrow 1}(z-1)X_i(z) = \frac{x_{M_i}^\top {\bm \beta}_i}{\bm{1}^\top {\bm \beta}_i} , \label{phi:2}
\end{align}
\end{subequations}
where $y_{M_i}^\top = [y_i^{[0]}, y_i^{[1]}, \ldots, y_i^{[M_j]}]$ and $x_{M_i}^\top = [x_i^{[0]}, x_i^{[1]}, \ldots, x_i^{{[M_j]}}]$ represent the vectors of the first $M_i + 1$ values of $y_i^{[t]}$ and $x_i^{[t]}$ respectively, and ${\bm \beta}_i$ denotes the coefficient vector of the polynomial $p_i(z)$.

Let us consider vectors of $2t+1$ consecutive discrete-time values at node $v_i$, defined as
\begin{align*}
y_{2t}^\top &= [y_i^{[0]}, y_i^{[1]}, \ldots, y_i^{[2t]} ], \\
x_{2t}^\top &= [x_i^{[0]}, x_i^{[1]}, \ldots, x_i^{[2t]} ],
\end{align*}
corresponding to the two iterations $y_i^{[t]}$ and $x_i^{[t]}$ at node $v_i$ (as specified in equations \eqref{subeq:1} and \eqref{subeq:2}, respectively). 
We define their associated Hankel matrices as follows:
\begin{subequations}\label{hankel_def}
\begin{align}\label{eq:hankel_def}
\Gamma\{y_{2t}^\top\} &\triangleq \begin{bmatrix}
y_i^{[0]} & y_i^{[1]} & \ldots & y_i^{[t]} \\
y_i^{[1]} & y_i^{[2]} & \ldots & y_i^{[t+1]} \\
\vdots & \vdots & \ddots & \vdots \\
y_i^{[t]} & y_i^{[t+1]} & \ldots & y_{i}^{[2t]}
\end{bmatrix}, \\
\Gamma\{x_{2t}^\top\} &\triangleq \begin{bmatrix}
x_i^{[0]} & x_i^{[1]} & \ldots & x_i^{[t]} \\
x_i^{[1]} & x_i^{[2]} & \ldots & x_i^{[t+1]} \\
\vdots & \vdots & \ddots & \vdots \\
x_i^{[t]} & x_i^{[t+1]} & \ldots & x_{i}^{[2t]}
\end{bmatrix}.
\end{align}
\end{subequations}
Additionally, we introduce vectors of successive differences for $y_i^{[t]}$ and $x_i^{[t]}$: 
\begin{align*}
\overline{y}_{2t}^\top &= [y_i^{[1]}-y_i^{[0]}, \ldots, y_i^{[2t+1]}-y_i^{[2t]}], \\
\overline{x}_{2t}^\top &= [x_i^{[1]}-x_i^{[0]}, \ldots, x_i^{[2t+1]}-x_i^{[2t]}]. 
\end{align*}
As demonstrated in \cite{2013:Ye}, the vector ${\bm \beta}_i$ can be computed as the kernel of the first defective Hankel matrices $\Gamma\{\overline{y}_{2t}^\top\}$ and $\Gamma\{\overline{x}_{2t}^\top\}$ for arbitrary initial conditions $y_0$ and $x_0$. 
Specifically, ${\bm \beta}_i$ can be calculated as the normalized kernel 
$$ 
{\bm \beta}_i = \begin{bmatrix} {\bm \beta}_i^{[0]} & {\bm \beta}_i^{[1]} & \ldots & {\bm \beta}_i^{[M_j-1]} & 1 \end{bmatrix}^\top 
$$
of the first defective Hankel matrix $\Gamma\{\overline{y}_{2t}^\top\}$, except for a set of initial conditions with Lebesgue measure zero.


We now present Lemma~\ref{lem:main} (originally proposed in \cite{themisCDC:2013}). 
It establishes that in a strongly connected digraph, nodes can distributively compute the \textit{exact} average $\mu$ within a finite number of steps.


\begin{lemma}[\hspace{-0.001cm}\cite{themisCDC:2013}]\label{lem:main}
Consider a strongly connected graph $\mathcal{G}_d = (\mathcal{V}, \mathcal{E})$. 
Let $y_i^{[t]}$ and $x_i^{[t]}$ (for all $v_i \in \mathcal{V}$ and $t=0,1,2,\ldots$) be the outcomes of iterations \eqref{subeq:1} and \eqref{subeq:2}, respectively, where $P = [p_{ij}] \in \mathbb{R}_{+}^{n \times n}$ represents any set of weights that conform to the graph structure and constitute a primitive column stochastic weight matrix. 
The average consensus solution can be obtained distributively in finite time at each node $v_i$ by computing:
\begin{align}
\mu_i \triangleq \lim_{t \rightarrow \infty} \frac{y_i^{[t]}}{x_i^{[t]}} = \frac{\phi_y (i)}{\phi_x (i)} = \frac{y_{M_i}^\top{\bm \beta}_i}{x_{M_i}^\top{\bm \beta}_i},
\end{align}
where:
$\phi_y (i)$ and $\phi_x (i)$ are defined by equations \eqref{phi:1} and \eqref{phi:2}, respectively, and ${\bm \beta}_i$ is the coefficient vector as defined in \eqref{beta}. 
\end{lemma}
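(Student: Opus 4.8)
The plan is to combine the asymptotic guarantee of Proposition~\ref{lemma_christoforos} with the minimal-polynomial and $z$-transform identities recalled above, so that each node reconstructs its own steady state exactly from a finite window of its local iterates. First I would note that, since $P$ is primitive and column stochastic, the Perron--Frobenius theorem together with Proposition~\ref{lemma_christoforos} guarantees that each scalar sequence $y_i^{[t]}$ and $x_i^{[t]}$ converges individually (not merely their ratio): $y_i^{[t]} \to \phi_y(i)$, $x_i^{[t]} \to \phi_x(i)$ with $\phi_x(i) > 0$, and $\mu_i = \phi_y(i)/\phi_x(i)$ coincides with $\tfrac{1}{|\mathcal{V}|}\sum_{v_j \in \mathcal{V}} y_j^{[0]}$. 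It therefore suffices to (i) establish the closed forms \eqref{phi:1}--\eqref{phi:2} for $\phi_y(i)$ and $\phi_x(i)$, and (ii) show that $\bm{\beta}_i$ together with the first $M_i+1$ iterates can be obtained by node $v_i$ alone in finitely many steps.

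For (i), I would use that, writing $w_i^{[t]} = e_i^\top P^{t} w^{[0]}$ for either iteration, the sequence obeys the finite-order recursion \eqref{regression} governed by the minimal polynomial $q_i(s)$ of the pair $[P, e_i^\top]$, whose degree $M_i + 1$ is at most $n$; this is the standard observability/Cayley--Hamilton argument of \cite{2009:Ye,2013:Ye}. Taking $z$-transforms and using the time-shift property yields the rational form \eqref{ztranform}. The key structural step is that strong connectivity plus column stochasticity force $q_i(z)$ to have a simple root at $z = 1$ with every other root strictly inside the unit disk, so that $p_i(z) \triangleq q_i(z)/(z-1)$ is an honest polynomial with coefficient vector $\bm{\beta}_i$ and the final-value theorem is legitimate. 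Computing $\lim_{z \to 1}(z-1)Y_i(z)$ and $\lim_{z \to 1}(z-1)X_i(z)$ on \eqref{ztranform} cancels the pole at $1$ and, after collecting the boundary terms, produces exactly \eqref{phi:1} and \eqref{phi:2}. Dividing the two identities, the common factor $\bm{1}^\top \bm{\beta}_i$ cancels, leaving $\mu_i = (y_{M_i}^\top \bm{\beta}_i)/(x_{M_i}^\top \bm{\beta}_i)$.

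For (ii), I would invoke the Hankel-matrix characterization: the successive-difference vectors $\overline{y}_{2t}^\top$ (equivalently $\overline{x}_{2t}^\top$) satisfy the same recursion, so the Hankel matrices in \eqref{hankel_def} first lose rank at some step $t \le M_i \le n$, and the normalized kernel at that step is precisely $\bm{\beta}_i$; this is \cite{2013:Ye}, and it holds except on a set of initial conditions of Lebesgue measure zero. Since $\overline{y}_{2t}^\top$, $y_{M_i}^\top$, and $x_{M_i}^\top$ are all quantities observed locally at $v_i$, the node can then evaluate the ratio above exactly, completing the proof.

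\textbf{Main obstacle.} I expect the delicate part to be the spectral claim underpinning the final-value step: that $z = 1$ is a root of $q_i(z)$ of multiplicity exactly one and that no other root lies on or outside the unit circle — equivalently, that the dominant mode is observable from $e_i^\top$ while all transient modes decay. For the $x$-iteration this is automatic because $x^{[0]} = \bm{1}$ excites the dominant mode with a strictly positive steady state, but for the $y$-iteration and, crucially, for the well-posedness of the Hankel-kernel extraction of $\bm{\beta}_i$, this is exactly where the ``except on a measure-zero set of initial conditions'' qualification enters, and making that genericity argument precise (rather than deferring to \cite{themisCDC:2013,2013:Ye}) is what I expect to take the most effort.
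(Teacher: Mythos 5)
Your proposal is correct and follows essentially the same route the paper itself lays out in the development preceding the lemma (and in the cited source \cite{themisCDC:2013}): the minimal-polynomial recursion \eqref{regression}, the $z$-transform identity \eqref{ztranform}, the final-value-theorem evaluation giving \eqref{phi:1}--\eqref{phi:2}, and the Hankel-kernel extraction of ${\bm \beta}_i$ with the same Lebesgue-measure-zero caveat. Your added observation that primitivity and column stochasticity make each sequence converge individually to $\pi_i \sum_j y_j^{[0]}$ and $\pi_i |\mathcal{V}|$ (so $\phi_x(i)>0$ and the ratio is the exact average) is a correct and useful justification of the final-value step, consistent with the paper's argument.
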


Lemma~\ref{lem:main} establishes that in a strongly connected digraph, the average consensus can be determined through the ratio of the asymptotic values of two parallel iterations. 
Specifically, iteration \eqref{subeq:1} with initial condition $y^{[0]}$, and iteration \eqref{subeq:2} with initial condition $x^{[0]} = \bm{1}$. 
It is important to emphasize the choice of $x^{[0]} = \bm{1}$ as an initial condition for the second iteration. 
This initialization does not fall within the Lebesgue measure zero set of matrix $P$ (as defined in Proposition~\ref{lemma_christoforos}), thereby ensuring the validity and generality of the approach. 


\section{Problem Formulation}\label{probForm}

Let us consider a strongly connected digraph $\mathcal{G}_d = (\mathcal{V}, \mathcal{E})$ with $N$ nodes. 
In this paper, we aim to develop a distributed algorithm that solves the following \textit{distributed optimization} problem 
\begin{subequations}\label{optim_prob}
\begin{align}
\min_{\x \in \R^{n N}}~ & F(\x) \equiv \sum_{i=1}^N f_i(x_i), \label{Global_cost_function}  \\
\text{s.t.}~ & \x \in C \equiv \{\x \in \R^{n N} | \ x_i = x_j, \forall v_i, v_j, \in \mathcal{V} \}, \label{constr_same_x}
\end{align} 
\end{subequations}
where $\x = [x_1^\top, \ldots, x_N^\top]^\top$ represents the concatenation of local variables of the $N$ nodes, and $f_i$ denotes the local cost of each node $v_i$ (only known to node $v_i$). 
In our problem we are interested in \textit{finite-sum local costs} of the form
\begin{equation}\label{local_cost_func}
f_i(x_i) = \frac{1}{m_i} \sum_{h=1}^{m_i} \ell_i^h(x_i; \xi_i^h),  
\end{equation}
where, $m_i$ represents the number of data samples available at node $v_i$, $\ell_i^h$ is a loss function associated with the $h$-th data sample at node $v_i$, and $\xi_i^h$ denotes the $h$-th data sample. 
The formulation in \eqref{local_cost_func} is particularly relevant in machine learning scenarios (e.g., empirical risk minimization, federated learning, distributed regression). 

\subsection*{Operational Assumptions}

For solving problem \eqref{optim_prob} we introduce the following assumptions that hold throughout our paper. 

\begin{assumption}\label{str_conn}
The network is modeled as a \textit{strongly connected} digraph $\mathcal{G}_d = (\mathcal{V}, \mathcal{E})$. 
\end{assumption}

\begin{assumption}\label{lipsch_str_conv} 
    For every node $v_i$ its local cost function $f_i$ is smooth with constant $L_i$, strongly convex with constant $\mu_i$. 
    As a consequence, the global cost function $F$ (see e.g., \eqref{Global_cost_function}) has Lipschitz smoothness constant denoted by $L$ for which it holds $L \leq \sum_i L_i$, and strong convexity constant $\mu$ for which it holds $\mu \geq \min_i \mu_i$. 
\end{assumption} 


\begin{assumption}\label{bounded_grad} 
The stochastic gradient $\hat{\nabla} f_i (x_i)$ of every node $v_i$ satisfies $\ev{\hat{\nabla} f_i (x_i)} = \nabla f_i (x_i)$, and $\ev{\norm{\hat{\nabla} f_i (x_i) - \nabla f_i (x_i)}} \leq \sigma$, where $\sigma \in \mathbb{R}$. 
\end{assumption} 

\begin{assumption}\label{bound_nodes} 
Every node $v_i \in \mathcal{V}$ knows an upper bound
on the number of nodes in the network $n' \geq N$. 
\end{assumption} 

In Assumption~\ref{str_conn} the strong connectivity property facilitates the flow of data among all nodes.  
This is essential for the convergence of distributed algorithms in such network topologies. 
Assumption~\ref{lipsch_str_conv} is standard in first-order distributed optimization problems (see e.g., \cite{2018:Xie, 2018:Li_Quannan}). 
Its properties contribute significantly to the stability and convergence of gradient-based methods.
The smoothness of each local cost function combined with strong convexity, ensures the existence of a global optimal solution for our problem \cite{bubeck2015convex}. 
Specifically, strong convexity guarantees that the global cost function $F$ in \eqref{Global_cost_function} possesses only one minimum, while the Lipschitz smoothness is crucial for establishing convergence rates and for the stability of optimization algorithms in distributed settings. 
Assumption~\ref{bounded_grad} establishes that the stochastic gradient is an unbiased estimator of the true gradient with bounded variance. 
It is crucial for the analysis of convergence rates of stochastic optimization methods in distributed settings. 
Assumption~\ref{bound_nodes} enables nodes to coordinate their estimates and hyperparameters in finite time (see Section~\ref{subsec_FTERC}). 
It can be verified by executing a finite time size estimation algorithm as \cite[Algorithm~$2$]{2023_Rikos_Johan_size_estimation}.

%
%
%
%
\section{Distributed Learning with Automated Stepsizes and Finite Time Coordination}\label{sec:Algorithm1}

In this section we present a distributed algorithm which addresses the optimization problem in \eqref{optim_prob}. 
Our algorithm is named ``Distributed Learning Algorithm with Automated Stepsizes and Real-FTC (DLAS-R-FTC)'' and is detailed as Algorithm~\ref{Algorithm_real} below. 



\subsection{Distributed Learning Algorithm with Automated Stepsizes and Real-FTC (DLAS-R-FTC)}

\begin{algorithm}[!ht]
\caption{DLAS-R-FTC}
\label{Algorithm_real}
\begin{algorithmic}[1]
	\Require For each node $v_i \in \mathcal{V}$ initialize $x_{i}^{[0]}$; set $\eta_{i}^{[0]} = 10^{-10}$; $\theta_{i}^{[0]} = +\infty$; $\lambda_{i}^{[0]} = \eta_{i}^{[0]}$; $\kappa = 0.4$; $\alpha > 0$. Assumptions~\ref{str_conn},~\ref{lipsch_str_conv},~\ref{bounded_grad},~\ref{bound_nodes} hold. 
	\For{$k = 0, 1, \ldots$ each node $v_i$}
            \If{k = 0}
                \CommentState{local update}
                \State 
                \begin{equation}\label{local_stoch_gd}
                x_{i}^{[\frac{1}{2}]} = x_{i}^{[0]} - \lambda_{i}^{[0]} \hat{\nabla} f_{i}(x_{i}^{[0]}); 
                \end{equation}
                \CommentState{coordination}
                \State $x_{i}^{[1]} =$ Algorithm~\ref{Algorithm_real_FTC}($\x^{[\frac{1}{2}]}$), where $\x^{[\frac{1}{2}]} = [(x_{1}^{[\frac{1}{2}]})^\top, \ldots, (x_{N}^{[\frac{1}{2}]})^\top]^\top$; 
                \State $\widetilde{L}_{i}^{[0]} = 0$; 
            \EndIf
            \If{k > 0}
                \CommentState{local update}
                \State $L_{i}^{[k]} = \frac{\norm{\hat{\nabla} f_{i}(x_{i}^{[k]}) - \hat{\nabla} f_{i}(x_{i}^{[k-1]})}}{\norm{x_{i}^{[k]} - x_{i}^{[k-1]}}}$; 
                \State $\widetilde{L}_{i}^{[k]} = (1 - \kappa) L_{i}^{[k]} + \kappa \widetilde{L}_{i}^{[k-1]}$; 
                \State $\eta_{i}^{[k]} = \min \bigl ( \sqrt{1 + \theta_{i}^{[k-1]}} \eta_{i}^{[k-1]} , \frac{\alpha}{\widetilde{L}_{i}^{[k]}})$; 
                \CommentState{coordination}
                \State $\lambda_{i}^{[k]} =$ Algorithm~\ref{Algorithm_real_FTC}($\bm{\eta}^{[k]}$), where \newline $\bm{\eta}^{[k]} = [(\eta_{1}^{[k]})^\top, \ldots, (\eta_{N}^{[k]})^\top]^\top$; 
                \CommentState{local update}
                \State 
                $
                x_{i}^{[k + \frac{1}{2}]} = x_{i}^{[k]} - \lambda_{i}^{[k]} \hat{\nabla} f_{i}(x_{i}^{[k]}); 
                $
                \CommentState{coordination}
                \State 
                $x_{i}^{[k+1]} =$ Algorithm~\ref{Algorithm_real_FTC}($\x^{[k + \frac{1}{2}]}$), where \newline $\x^{[k + \frac{1}{2}]} = [(x_{1}^{[k + \frac{1}{2}]})^\top, \ldots, (x_{N}^{[k + \frac{1}{2}]})^\top]^\top$;  
                \CommentState{local update}
                \State $\theta_{i}^{[k]} = \frac{\eta_{i}^{[k]}}{\eta_{i}^{[k-1]}}$;  
            \EndIf
    \EndFor
    \Ensure Each node $v_i \in \mathcal{V}$ calculates the optimal solution $x^*$ of the optimization problem in \eqref{optim_prob}. 
\end{algorithmic}
\end{algorithm}

\begin{algorithm}[!ht]
\caption{Distributed FTERC} 
\label{Algorithm_real_FTC} 
\begin{algorithmic}[2]
	\Require Each node $v_i \in \mathcal{V}$ initializes $y_i^{[0]}$
        \Ensure Each node $v_i \in \mathcal{V}$ calculates $\frac{1}{N} \sum_{i=1}^N y_i^{[0]}$
	\For{$k = 0, 1, \ldots$ each node $v_i$ does}
            \If{k = 0} 
                \State execute FTERC in Section~\ref{subsec_FTERC} for $2n'$ time steps to compute $y_i^{[1]}$, and also determine $M_i$, and $\beta_i$ 
            \EndIf
            \If{k = 1} 
                \State execute max-consensus for $D$ time steps with input $M_i + 1$ to determine $M_{\max}$
                \State execute ratio consensus in \eqref{eq:1} with input $y_i^{[1]}$ for $n'$ time steps to determine $y_i^{[2]}$ 
            \EndIf
            \If{k > 1} 
                \State execute ratio consensus in \eqref{eq:1} with input $y_i^{[k]}$ for $t_{\max} = M_{\max} + 1$ time steps to determine $y_i^{[k+1]}$ 
            \EndIf
    \EndFor
\end{algorithmic}
\end{algorithm}

The intuition of Algorithm~\ref{Algorithm_real} is the following. 
During the first time step $k=0$, nodes perform a local update using stochastic gradient descent (see step~$3$). 
Then they participate in a coordination phase to align their estimates of the optimal solution. 
This coordination is achieved by executing Algorithm~\ref{Algorithm_real_FTC} until convergence (see step~$4$). 
Then, for each time steps $k>0$ nodes estimate their local Lipschitz constant (see step~$8$). 
They filter this estimate for smoothing it, and use it to calculate an appropriate step size (see step~$9$, $10$). 
They then agree on a common step size across the network through another coordination phase by executing Algorithm~\ref{Algorithm_real_FTC} until convergence (see step~$11$). 
Finally, they perform a local update using stochastic gradient descent (see step~$12$). 
They execute Algorithm~\ref{Algorithm_real_FTC} until convergence to align their estimates, they update their parameters, and repeat the operation (see step~$13$, $14$). 

Algorithm~\ref{Algorithm_real_FTC} describes the Finite-Time Exact Ratio Consensus (FTERC), which enables nodes to calculate the \textit{exact} average of their states in a finite number of steps in a distributed manner. 
The intuition of Algorithm~\ref{Algorithm_real_FTC} is the following. 
During the first iteration (i.e., $k = 0$), nodes execute the FTERC algorithm for a specific number of time steps to compute an updated value of their state and also determine local parameters (see steps~$3$, $4$). 
In the second iteration ($k = 1$), nodes employ a max-consensus algorithm for $D$ time steps to identify a maximum parameter value across the network. 
Then, the ratio consensus method is applied to for $n'$ time steps to enable nodes to update their state values. 
For all subsequent iterations ($k > 1$), nodes continue to apply the ratio consensus method for $t_{\max}$ time steps to update their states. 

\begin{remark}[Comparison with Literature]
The operation of our Algorithm~\ref{Algorithm_real} (DLAS-R-FTC) builds upon recent advancements in distributed optimization by incorporating an automated stepsize selection mechanism that adapts each node's stepsize based on its locally stored data (or local objective function). 
Additionally, it employs a finite-time coordination algorithm (Algorithm~\ref{Algorithm_real_FTC}) to eliminate stepsize heterogeneity among nodes and also align their states, thereby enhancing convergence speed and stability.
This approach represents a significant departure from existing works in the literature. 
For instance, algorithms such as \cite{liggett2022distributed, Yongqiang2024AutomatedStepsizes} rely on dynamic consensus operations that involve only a single communication step between nodes to reduce stepsize heterogeneity. 
While this strategy reduces heterogeneity to some extent, it does not eliminate it entirely, leaving room for convergence instabilities and potential divergence. 
In contrast, our Algorithm~\ref{Algorithm_real} (DLAS-R-FTC) leverages finite-time coordination to fully eliminate stepsize heterogeneity, ensuring synchronized updates across nodes and significantly improving convergence stability.
Furthermore, our algorithm aligns the states of nodes onto the consensus subspace through by implementing (Algorithm~\ref{Algorithm_real_FTC} between gradient descent updates. 
This design choice is inspired by works such as \cite{2023:Rikos_Johan_IFAC}. 
This approach enables faster and more stable convergence compared to existing methods (e.g., see \cite[Fig.~$1$]{2023:Bastianello_Rikos_Johansson}). 
Overall, by achieving exact synchronization of stepsizes and state alignment in finite time, our algorithm outperforms prior techniques in both convergence speed and accuracy, as we also show in Section~\ref{sec:simulation}. 
\end{remark}

\subsection{Convergence Analysis of DLAS-R-FTC}

We now analyze the convergence of Algorithm~\ref{Algorithm_real}. 
Initially, we interpret the operation of Algorithm~\ref{Algorithm_real} as an inexact projected gradient descent method. 
This interpretation is important for our subsequent analysis. 
Then, in Theorem~\ref{theorem_convergence_Alg1} we establish the correctness of our algorithm. 
Specifically, we show that as Algorithm~\ref{Algorithm_real} progresses, the state $x_i^{[k]}$ of every node $v_i \in \mathcal{V}$ converges to the optimal solution $x^*$, thereby addressing the optimization problem \eqref{optim_prob}. 

Firstly, let us note that Algorithm~\ref{Algorithm_real} can be interpreted as an inexact projected gradient descent method as follows 
\begin{equation}\label{eq:algorithm-inexact} 
	\x^{[k+1]} = \proj_\C\left( \x^{[k]} - \lambda_{i}^{[k]} \nabla f(\x^{[k]}) \right) + \e^{[k]}_g. 
\end{equation} 
The vector $\e^{[k]}_g$ accounts for the inexactness introduced due to the usage of stochastic gradients (see Assumption~\ref{bounded_grad}) and is defined as 
    \begin{align} 
             \e^{[k]}_g := & \proj_\C(\x^{[k]} - \lambda_{i}^{[k]} \hat{\nabla} f(\x^{[k]})) \nonumber \\
             & - \proj_\C(\x^{[k]} - \lambda_{i}^{[k]} \nabla f(\x^{[k]})) , \label{eq:error-stoch_grad}
    \end{align}
where $\hat{\nabla} f(\x^{[k]})$ is the stochastic gradient of $f(\x^{[k]})$, and $\nabla f(\x^{[k]})$ is the true gradient of $f(\x^{[k]})$. 

We are now ready to establish correctness of Algorithm~\ref{Algorithm_real} via the following theorem. 

\begin{theorem}\label{theorem_convergence_Alg1} 
    Under Assumptions~\ref{str_conn}--\ref{bound_nodes} Algorithm~\ref{Algorithm_real} generates a sequence of points $\{\x^{[k]}\}$ which satisfy 
        \begin{align} 
             \ev{\norm{\x^{[k+1]} - \x^*}} \leq & \ \left( \prod_{t = 0}^k \zeta^{[t]} \right) \ \ev{\norm{\x^{[0]} - \x^*}}  \nonumber\\
            + &\sigma \sum_{h = 0}^{k} \lambda_{i}^{[h]} \prod_{j = h+1}^{k} \zeta^{[j]} , \label{convergence_bound_alg1}
        \end{align}
    where $\zeta^{[t]} = \max\{ |1 - \lambda_{i}^{[t]} L|, |1 - \lambda_{i}^{[t]} \mu| \} \in (0, 1)$, $L$ and $\mu$ are defined in Assumption~\ref{lipsch_str_conv} and $\sigma$ is defined in Assumption~\ref{bounded_grad}, respectively. 
\end{theorem}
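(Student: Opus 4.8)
The strategy is to exploit the interpretation of Algorithm~\ref{Algorithm_real} as the inexact projected gradient descent recursion \eqref{eq:algorithm-inexact}, and to analyze the error recursion it induces. First I would recall that, since the finite-time coordination algorithm (Algorithm~\ref{Algorithm_real_FTC}) computes the \emph{exact} average in finitely many steps (Lemma~\ref{lem:main}), the coordination step introduces \emph{no} error: after each call to Algorithm~\ref{Algorithm_real_FTC} every node holds $\frac{1}{N}\sum_i (\cdot)_i$, which is exactly $\proj_\C(\cdot)$ on the consensus subspace $C$. In particular, stepsize heterogeneity is eliminated, so $\lambda_i^{[k]}$ is the same scalar $\lambda^{[k]}$ for all nodes, and the true-gradient part of the update is precisely $\proj_\C(\x^{[k]} - \lambda^{[k]}\nabla f(\x^{[k]}))$; the only inexactness is $\e_g^{[k]}$ of \eqref{eq:error-stoch_grad}, coming from using $\hat\nabla f$ instead of $\nabla f$.

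Next I would set up the one-step contraction. Because $x^*\in C$ is the minimizer of $F$ and $C$ is closed and convex, $x^* = \proj_\C(x^* - \lambda^{[k]}\nabla f(x^*))$ (the projected-gradient fixed point). Subtracting this from \eqref{eq:algorithm-inexact} and using nonexpansiveness of $\proj_\C$ gives
\begin{equation*}
\norm{\x^{[k+1]}-\x^*} \le \norm{(\x^{[k]}-\lambda^{[k]}\nabla f(\x^{[k]})) - (\x^*-\lambda^{[k]}\nabla f(\x^*))} + \norm{\e_g^{[k]}}.
\end{equation*}
The first term is bounded by $\zeta^{[k]}\norm{\x^{[k]}-\x^*}$ using the standard fact that, for an $L$-smooth $\mu$-strongly convex function, the gradient-descent map $I - \lambda\nabla f$ is a contraction with factor $\max\{|1-\lambda L|,|1-\lambda\mu|\}$; I would cite Assumption~\ref{lipsch_str_conv} for the constants $L,\mu$ of the global cost $F$. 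For the error term, nonexpansiveness of $\proj_\C$ in the definition \eqref{eq:error-stoch_grad} yields $\norm{\e_g^{[k]}} \le \lambda^{[k]}\norm{\hat\nabla f(\x^{[k]}) - \nabla f(\x^{[k]})}$, and taking expectations with Assumption~\ref{bounded_grad} gives $\ev{\norm{\e_g^{[k]}}} \le \lambda^{[k]}\sigma$. Combining, and taking expectations, produces the scalar recursion $a_{k+1} \le \zeta^{[k]} a_k + \lambda^{[k]}\sigma$ with $a_k := \ev{\norm{\x^{[k]}-\x^*}}$.

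Finally, I would unroll this linear recursion: iterating from $k$ down to $0$ gives exactly
\begin{equation*}
a_{k+1} \le \Bigl(\prod_{t=0}^{k}\zeta^{[t]}\Bigr) a_0 + \sigma\sum_{h=0}^{k}\lambda^{[h]}\prod_{j=h+1}^{k}\zeta^{[j]},
\end{equation*}
which is \eqref{convergence_bound_alg1} (writing $\lambda_i^{[h]}$ for the common stepsize). It remains to argue $\zeta^{[t]}\in(0,1)$: this follows provided $0<\lambda^{[t]}<2/L$, which I would justify from the stepsize rule in steps~9--10 — the cap $\alpha/\widetilde L_i^{[k]}$ together with the choice of $\alpha$ and the fact that the filtered local estimates $\widetilde L_i^{[k]}$ track quantities controlled by $L$ keeps $\lambda^{[t]}$ in the stable range, while $\eta_i^{[0]}=10^{-10}>0$ and the multiplicative updates keep it strictly positive.

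\textbf{Main obstacle.} The delicate point is rigorously controlling $\zeta^{[t]}\in(0,1)$, i.e.\ showing the \emph{automatically selected} stepsizes stay in $(0,2/L)$ uniformly: the estimator $L_i^{[k]}$ is a local, stochastic secant approximation, the filter in step~10 mixes it over time, and the $\sqrt{1+\theta_i^{[k-1]}}$ growth factor can inflate $\eta_i^{[k]}$ before the $\alpha/\widetilde L_i^{[k]}$ cap binds — so one needs an argument (likely invoking the design of $\alpha$ and $\kappa=0.4$, and perhaps an expectation/high-probability bound on the secant estimates via Assumption~\ref{bounded_grad}) that the common stepsize after coordination never leaves the contraction regime. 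The contraction and unrolling steps themselves are routine.
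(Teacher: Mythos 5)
Your proposal follows essentially the same route as the paper's proof: interpret the iteration as inexact projected gradient descent, subtract the fixed point $\x^* = \proj_\C(\x^* - \lambda_i^{[k]}\nabla f(\x^*))$, bound the exact-gradient part by the contraction factor $\zeta^{[k]}$ via smoothness and strong convexity, bound $\ev{\norm{\e_g^{[k]}}}\le\lambda_i^{[k]}\sigma$ using the $1$-Lipschitz continuity of the projection together with Assumption~\ref{bounded_grad}, and unroll the resulting linear recursion. The ``main obstacle'' you flag --- rigorously showing that the automatically selected common stepsize keeps $\zeta^{[t]}\in(0,1)$ --- is in fact not addressed in the paper either (it is simply asserted in the theorem statement), so your proof is, if anything, more candid about that gap than the original.
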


\begin{proof} 
See Appendix~\ref{appen_theorem_convergence_Alg1}.
\end{proof} 

\begin{remark}[Alternative Bound of Theorem~\ref{theorem_convergence_Alg1}]\label{alter_bound_alg1}
    In our analysis in Theorem~\ref{theorem_convergence_Alg1} we established a bound that depends on products of the quantities $\zeta^{[k]}$ and $\lambda^{[k]}$. 
    However, we can simplify the expression in \eqref{convergence_bound_alg1} by considering $\nu^{[k]} = \max_{1 \leq t \leq k} \{ \zeta^{[t]} \} = \max_{1 \leq t \leq k} \{ \max\{ |1 - \lambda_{i}^{[t]} L|, |1 - \lambda_{i}^{[t]} \mu| \} \} \in (0, 1)$, and $\overline{\lambda}_{i}^{[k]} = \max_{1 \leq t \leq k} \{ \lambda_{i}^{[t]} \}$. 
    Considering $\nu^{[k]}$, and $\overline{\lambda}_{i}^{[k]}$, we have that under Assumptions~\ref{str_conn}--\ref{bound_nodes} Algorithm~\ref{Algorithm_real} generates a sequence of points $\{\x^{[k]}\}$ which satisfy 
        \begin{align} 
             \ev{\norm{\x^{[k+1]} - \x^*}} \leq & \ (\nu^{[k]})^k \ \ev{\norm{\x^{[0]} - \x^*}}  \nonumber\\
            + &\left( \overline{\lambda}_{i}^{[k]} \sigma \right) \frac{1 - (\nu^{[k]})^{k+1}}{1 - \nu^{[k]}}. \label{convergence_bound_alg1_alter}
        \end{align}
    The updated expression in \eqref{convergence_bound_alg1_alter} is more compact (compared to \eqref{convergence_bound_alg1}) but yields a more conservative bound regarding the convergence of Algorithm~\ref{Algorithm_real}. 
\end{remark} 



\section{Simulation Results}\label{sec:simulation}


In this section, we demonstrate the correctness of Algorithm~\ref{Algorithm_real} (DLAS-R-FTC) through numerical simulations. 

We focus on the \textit{linear regression} problem characterized by problem \eqref{optim_prob} with local cost function of the form 
\begin{equation}\label{local_func_simulations_1} 
    f_i(x_i) = \frac{1}{m_i} \sum_{h=1}^{m_i} \norm{ x_i - \xi_i^h }^2 , 
\end{equation} 
for each node $v_i \in \mathcal{V}$. 
In~\eqref{local_func_simulations_1}, each node $v_i$ stores $m_i = 50$ data samples. 
For each data sample $\xi_i^h = [ \chi_i^h \ \psi_i^h ]$ of each node $v_i \in \mathcal{V}$ (where $h \in \{ 1, 2, ..., m_i \}$), the parameter $\chi_i^h$ was uniformly sampled from the set $[-5, 5]$, and the parameter $\psi_i^h$ was generated as $\psi_i^h = \beta + \vartheta \chi_i^h + \gamma$, with  $\beta = 4$, $\vartheta = 3$ being the regression parameter (unknown to the nodes), and $\gamma \sim \mathcal{N}(0, 7^2)$ is zero-mean i.i.d. Gaussian noise with standard deviation $7$ and variance $7^2$. 
Note that at each iteration $k$, every node $v_i$ computes a stochastic gradient $\hat{\nabla} f_{i}(x_{i}^{[k]})$ using \textit{a single} uniformly sampled data point $\xi_i^h$ from its local dataset (as described in Assumption~\ref{bounded_grad}). 
The distribution of our data samples is shown in Fig.~\ref{data_samples_plot} with the solid line representing the noise-free linear model with parameters $\beta, \vartheta$ that was used to generate the data. 

\begin{figure}[t]
	\centering
\includegraphics[width=0.95\linewidth]{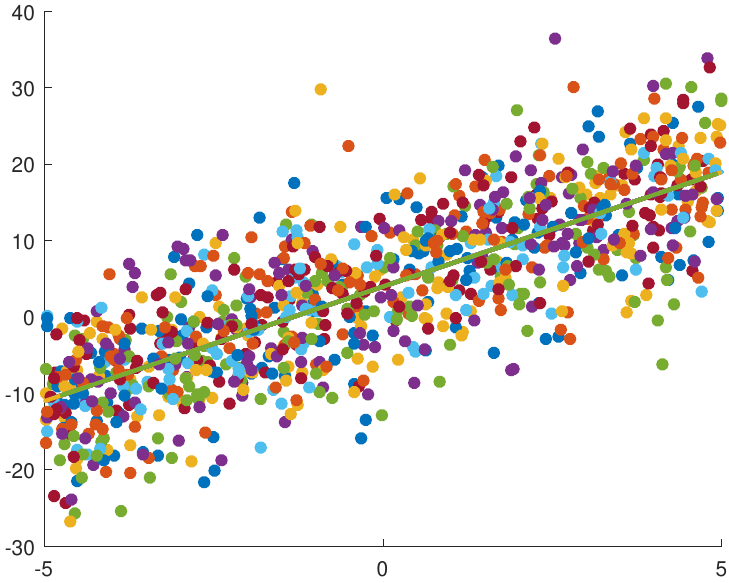}
	\caption{Distribution of data samples across all nodes. The solid line represents the noise-free linear model.} 
	\label{data_samples_plot}
\end{figure}

In Fig.~\ref{convergence_plot} we demonstrate the operation of our Algorithm~\ref{Algorithm_real} (DLAS-R-FTC) against \cite[Algorithm~$3$ (DLAS)]{liggett2022distributed}. 
We show the error $\varepsilon^{[k]}$ defined as 
\begin{equation}\label{eq:distance_optimal} 
    \varepsilon^{[k]} = \sqrt{ \sum_{i=1}^N \frac{(x_i^{[k]} - \vartheta)^2}{(x_i^{[0]} - \vartheta)^2} } , 
\end{equation} 
averaged over $10$ randomly generated digraphs $\mathcal{G}_d = (\mathcal{V}, \mathcal{E})$ with $N = 20$ nodes each. 
Assumptions~\ref{str_conn},~\ref{lipsch_str_conv},~\ref{bounded_grad},~\ref{bound_nodes} hold for each randomly generated digraph. 
The error $\varepsilon^{[k]}$ is plotted on a logarithmic scale against the number of iterations $k$ with $\vartheta$ being the optimal solution of our optimization problem in \eqref{optim_prob}. 

\begin{figure}[t]
	\centering
\includegraphics[width=0.95\linewidth]{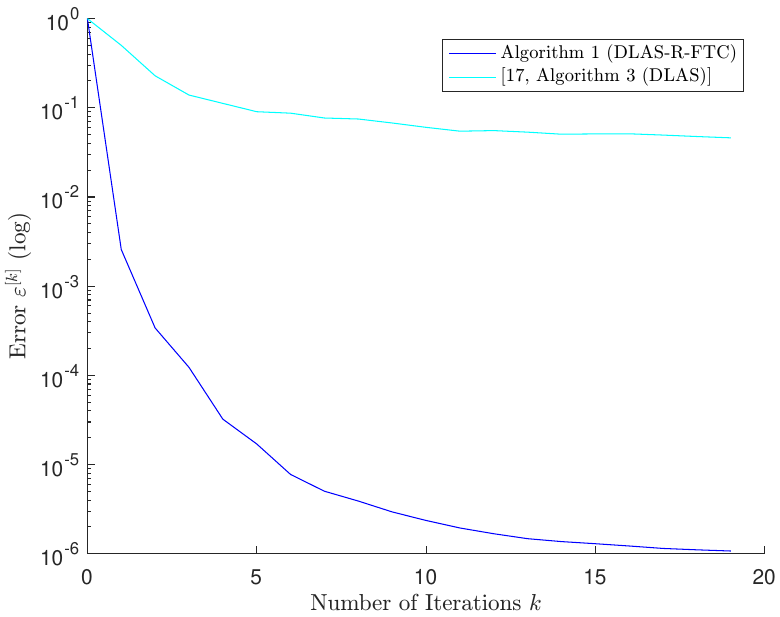}
	\caption{Convergence comparison of Algorithm~\ref{Algorithm_real} (DLAS-R-FTC) against \cite[Algorithm~$3$ (DLAS)]{liggett2022distributed} for the linear regression problem. 
    The normalized error $\varepsilon^{[k]}$ (defined in \eqref{eq:distance_optimal}) is averaged over $10$ random digraphs with $N = 20$ nodes and plotted against iterations $k$ on a logarithmic scale.} 
	\label{convergence_plot}
\end{figure}

In Fig.~\ref{convergence_plot} we can see that Algorithm~\ref{Algorithm_real} (DLAS-R-FTC) outperforms \cite[Algorithm~$3$ (DLAS)]{liggett2022distributed} in terms of convergence speed. 
This improvement is mainly due to to the implementation of Algorithm~\ref{Algorithm_real_FTC} in (i) step-$11$, and (ii) step-$13$ of Algorithm~\ref{Algorithm_real} (DLAS-R-FTC). 
The implementation in step-$11$ enables nodes to utilize equal step-sizes eliminating stepsize heterogeneity, and thus improving the convergence speed of the algorithm. 
This ensures that all nodes follow a synchronized descent direction, avoiding the misalignment caused by heterogeneous stepsizes. 
The implementation in step-$13$ allows nodes to achieve a more accurate estimation of the optimal solution at each time step $k$, thereby improving the convergence speed. 
This approach outperforms most existing methods, which rely on a single step of local data processing and one step of coordination among nodes, as opposed to employing a finite-time algorithm like Algorithm~\ref{Algorithm_real_FTC} for node coordination (e.g., see \cite[Fig.~$1$]{2023:Bastianello_Rikos_Johansson}).
Additionally, the implementation in iteration step-$13$ enables nodes to coordinate their updated states onto the consensus subspace $C$ and thus fulfill \eqref{constr_same_x}. 
In Fig.~\ref{convergence_plot}, we can also see that during the operation of Algorithm~\ref{Algorithm_real} (DLAS-R-FTC) the error $\varepsilon^{[k]}$ converges to a neighborhood of the optimal solution (a value almost equal to $10^{-6}$). 
This behavior aligns with our results in Theorem~\ref{theorem_convergence_Alg1}. 
Specifically, 
in Theorem~\ref{theorem_convergence_Alg1} 
we have that the state of each node converges to a neighborhood of the optimal solution due to nodes utilizing stochastic gradients (see Assumption~\ref{bounded_grad}). 
This neighborhood depends on (i) the bounded variance of stochastic gradients (since from Assumption~\ref{bounded_grad} we have that for every node $v_i$ it holds $\ev{\norm{\hat{\nabla} f_i (x_i) - \nabla f_i (x_i)}} \leq \sigma$, where $\sigma \in \mathbb{R}$), and (ii) the sequence of step-sizes $\{ \lambda_{i}^{[k]} \}$. 




%
%
%
%
\section{Conclusions and Future Directions}\label{sec:conclusions}

\subsection{Conclusions}

In our paper, we focused on the distributed learning problem by introducing a novel algorithm (named DLAS-R-FTC) which incorporates a mechanism for automating stepsize selection among nodes. 
This mechanism relies on the implementation of a finite time coordination protocol for eliminating stepsize heterogeneity among nodes. 
We analyzed the operation of our algorithm and established its convergence to the optimal solution for the case where the local cost functions of nodes are strongly convex and smooth. 
We concluded our paper with numerical simulations for a linear regression problem, showcasing that our algorithm outperforms current approaches in terms of convergence speed and accuracy. 

\subsection{Future Directions}

This work reveals new challenges in adaptive coordination mechanisms and optimization strategies for distributed learning systems. Indicatively, some possible directions are the following. 
\begin{list4}
\item Part of ongoing research focuses on extending our algorithm to distributed learning problems where the local cost functions are convex but not strongly convex. 
\item Additionally, we plan to consider more realistic communication channels in which the bandwidth is limited and introduce resource efficient information exchange mechanisms. 
\item Finally, we plan to consider wireless communication channels in which there are also packet drops. In such setups, finite-time algorithms, such as FTERC, fail. Thus, more robust coordination mechanisms~\cite{CHARALAMBOUS2024} are required.
\end{list4} 



\appendices
%
%
%
%
\section{
Proof of Theorem~\ref{theorem_convergence_Alg1}
}

\label{appen_theorem_convergence_Alg1}
\noindent Eq.~\eqref{eq:algorithm-inexact} can be written as 
\begin{equation}\label{inex_eq_2}
    \x^{[k+1]} - \x^{*} = \proj_\C\left( \x^{[k]} - \lambda_{i}^{[k]} \nabla f(\x^{[k]}) \right) - \x^{*} + \e^{[k]}_g . 
\end{equation}
Taking the norm of \eqref{inex_eq_2} and using the triangle inequality we have 
\begin{align}
            \norm{\x^{[k+1]} - \x^{*}} \leq & \ \norm{\proj_\C\left( \x^{[k]} - \lambda_{i}^{[k]} \nabla f(\x^{[k]}) \right) - \x^{*}} \nonumber\\
            & + \norm{\e^{[k]}_g}. \label{inex_eq_3}
\end{align}
Besides \eqref{inex_eq_3} we also have that
\begin{align}
            &\norm{\proj_\C\left( \x^{[k]} - \lambda_{i}^{[k]} \nabla f(\x^{[k]}) \right) - \x^{*}} + \norm{\e^{[k]}_g}  \nonumber \\
          & \leq  \ \zeta^{[k]} \norm{\x^{[k]} - \x^{*}} + \norm{\e^{[k]}_g}, \label{inex_eq_4}
\end{align}
since from Assumption~\ref{lipsch_str_conv} our problem in \eqref{optim_prob} is smooth and strongly convex with smoothness constant $L$ and strong convexity constant $\mu$ \cite{taylor_exact_2018}, and also $$
\x^{*} = \proj_\C\left( \x^{*} - \lambda_{i}^{[k]} \nabla f(\x^{*}) \right).
$$
Combining \eqref{inex_eq_3} and \eqref{inex_eq_4}, and taking the expected value we have 
\begin{equation}\label{inex_eq_5}
    \ev{\norm{\x^{[k+1]} - \x^{*}}} \leq \zeta^{[k]} \ev{\norm{\x^{[k]} - \x^{*}}} + \ev{\norm{\e^{[k]}_g}}. 
\end{equation}
Focusing on \eqref{inex_eq_5} and \eqref{eq:error-stoch_grad} we bound the term $\ev{\norm{\e^{[k]}_g}}$ as follows 
\begin{align}
            &\ev{\norm{\e^{[k]}_g}}  \nonumber \\
            &\overset{(i)}{\leq}  \ev{\norm{\x^{[k]} - \lambda_{i}^{[k]} \hat{\nabla} f(\x^{[k]}) - (\x^{[k]} - \lambda_{i}^{[k]} \nabla f(\x^{[k]}))}} \ \nonumber \\ 
            &=  \lambda_{i}^{[k]} \ev{\norm{\hat{\nabla} f(\x^{[k]}) - \nabla f(\x^{[k]}))}} \ \nonumber \\ 
            &\leq  \lambda_{i}^{[k]} \sigma , \label{inex_eq_6}
\end{align} 
where the inequality $(i)$ holds from \cite[Proposition~$4.16$]{bauschke_convex_2017} (i.e., the projection is $1$-Lipschitz continuous). 
To complete the proof, we iterate the inequality in \eqref{inex_eq_5} combined with \eqref{inex_eq_6} and apply the properties of geometric series. 

\bibliographystyle{IEEEtran}
\bibliography{references}

\end{document}